\newtheorem{theo}{Theorem}
\newtheorem{prop}[theo]{Proposition}
\newtheorem{lem}[theo]{Lemma}
\theoremstyle{definition}
\newtheorem{case}{Case}[theo]
\begin{document}
\title{On Hamiltonian-Connected and Mycielski graphs}
\def\correspondingauthor{\footnote{Corresponding author}}
\author{Ashok Kumar Das\correspondingauthor{}  and  Indrajit Paul\\Department of Pure Mathematics\\
University of Calcutta\\
35, Ballygunge Circular Road\\
Kolkata-700019\\
Email Address - ashokdas.cu@gmail.com \&
paulindrajit199822@gmail.com}
\maketitle
\begin{abstract}
A graph $G$ is Hamiltonian-connected if there exists a Hamiltonian path 
between any two vertices of $G$. It is known that if $G$ is 2-connected 
then the graph $G^2$ is Hamiltonian-connected. In this paper we prove that the 
square of every self-complementary graph of order grater than 4 is 
Hamiltonian-connected. If $G$ is a $k$-critical graph, then we prove that the Mycielski graph $\mu(G)$ is $(k+1)$-critical graph. Jarnicki et al.~\cite{jm} proved that for every 
Hamiltonian graph of odd order, the Mycielski graph $\mu(G)$ of $G$ is 
Hamiltonian-connected. They also pose a conjecture that if $G$ is 
Hamiltonian-connected and not $K_2$ then $\mu(G)$ is Hamiltonian-connected. In this paper we also prove this conjecture.
\end{abstract}
Keywords: \textit{Hamiltonian-connected graphs, self-complementary graphs, Mycielski graphs.}
\section{INTRODUCTION}
A graph $G$ is Hamiltonian if it has a cycle containing all 
the vertices of $G$. Hamiltonian graphs have been extensively studied by 
several researchers. Ronald J. Gould has written a survey paper~\cite{g} on 
Hamiltonian graphs. But unfortunately, no easy testable characterization 
is known for Hamiltonian graphs. Bondy and Chv\'atal~\cite{bc} proved that a 
graph is Hamiltonian if and only if its closure $Cl(G)$ is Hamiltonian. 
\textit{Closure} of $G$ is formed by recursively joining two non-adjacent vertices 
of $G$ whose degree sum is at least $n$, where $n$ is the number of vertices 
of $G$. But this condition for Hamiltonicity doesn't help much as we are 
required to test another graph to be Hamiltonian. However, it 
provides a method for proving a sufficient condition for Hamiltoniancity. A condition that 
forces $Cl(G)$ to be Hamiltonian also forces $G$ to be Hamiltonian. For 
example, if $Cl(G)$ is complete then $G$ is also Hamiltonian. Chv\'atal~\cite{c} 
used this method to provide one of the strongest sufficient conditions 
for a graph to be Hamiltonian based on the degrees of the vertices of the 
graph.\par
A path in $G$ is \emph{Hamiltonian path} if it contains all 
the vertices of the graph $G$. A graph $G$ is \emph{Hamiltonian-connected} if for any two vertices $u,v\in V(G)$, there exists a $u$-$v$ 
Hamiltonian path. Obviously, a Hamiltonian-connected graph is 
Hamiltonian. However, the converse is not true.\par
Let $G=(V,E)$ be a connected graph and $u,v$ are two distinct vertices of 
$G$. Then \emph{distance} between $u$ and $v$, denoted by $d(u,v)$, is 
the length of the shortest path between $u$ and $v$. \emph{Diameter} of 
$G$, denoted by $diam(G)$ is the largest distance between the pair of vertices in 
$G$. The square of a graph $G$, denoted by $G^2$, is a 
graph with vertex set $V$ and $uv$ is an edge of $G^2$ if and only if  
$d(u,v)\leq2$ in $G$. Similarly, the cube of $G$, denoted by $G^3$, is the graph with vertex set $V$ and $uv$ is an edge of $G^3$ if and only if $d(u,v)\leq 3$. In this paper 
we give few more results on Hamiltonian-Connected graphs and Mycielski's 
graphs.\par
\section{Self-complementary Graphs and Hamiltonian Connectedness}
A graph is \emph{self-complementary} if the graph is isomorphic to its complement. A graph $G$ is \emph{k-connected} $(k\geq 1)$ if removal of vertices, (fewer than $k$) keeps the graph connected. If removal of a single vertex $v$ of $G$ disconnects the graph, then $v$ is a \emph{cut vertex} of $G$.\par
As mentioned in the introduction the following theorem, due to Chv\'atal~\cite{c}, is a sufficient condition for a graph to be Hamiltonian based on the degrees of the vertices of the graph.\par
\begin{theo}[\cite{c}]\label{t1}
Let $G$ be a graph of order $n>3$, the degrees $d_i$ of whose vertices satisfy $d_1\leq d_2\leq \cdots \leq d_n$. If for any $k<\frac{n}{2}$, we have $d_k\geq k+1\ or\ d_{n-k}\geq n-k$, then $G$ is Hamiltonian.
\end{theo}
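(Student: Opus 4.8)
The plan is to prove this by contradiction, via an edge‑maximal counterexample together with the classical rotation argument. First I would record that the hypothesis is inherited by edge‑supergraphs: inserting an edge $xy$ raises $d(x)$ and $d(y)$ by one and fixes the remaining degrees, so every term of the nondecreasing degree sequence can only go up, and both alternatives ``$d_k\ge k+1$'' and ``$d_{n-k}\ge n-k$'' persist. Hence, were the theorem false, I could pick a counterexample $G$ that is edge‑maximal: $G$ is non‑Hamiltonian and meets the degree condition, while $G+xy$ is Hamiltonian for every pair of non‑adjacent $x,y$. Since $K_n$ is Hamiltonian for $n>3$, $G$ is not complete, so there are non‑adjacent vertices; among all such pairs choose $u,v$ with $d(u)+d(v)$ as large as possible.

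Second, the rotation step. As $G+uv$ is Hamiltonian, $G$ has a Hamiltonian $u$-$v$ path $v=v_1,v_2,\dots,v_n=u$. Put $A=\{\,i: vv_{i+1}\in E(G)\,\}$ and $B=\{\,i: uv_i\in E(G)\,\}$; both lie in $\{1,\dots,n-1\}$, with $|A|=d(v)$ and $|B|=d(u)$. The key fact is $A\cap B=\varnothing$, because $i\in A\cap B$ would yield the Hamiltonian cycle $v_1v_2\cdots v_i v_n v_{n-1}\cdots v_{i+1}v_1$ in $G$. Therefore $d(u)+d(v)=|A|+|B|\le n-1$; writing $h:=\min\{d(u),d(v)\}$ we get $2h\le n-1$, so $h<n/2$, and (relabelling $u,v$ if needed) we may assume $d(v)=h$.

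Third, the counting step, which closes the argument. For each $i\in A$ the vertex $v_i$ is distinct from $u$ and, since $i\notin B$, non‑adjacent to $u$; maximality of $d(u)+d(v)$ then forces $d(v_i)\le d(v)=h$. This exhibits $h$ distinct vertices of degree at most $h$, so $d_h\le h$. Applying the hypothesis with $k=h$ (admissible, as $h<n/2$), the option $d_h\ge h+1$ is false, hence $d_{n-h}\ge n-h$, i.e.\ at least $h+1$ vertices have degree $\ge n-h$. But $d(v)=h<n-h$, so $v$ is not among them; and any $w\ne v$ non‑adjacent to $v$ has $d(w)\le n-1-d(v)=n-1-h<n-h$, so every vertex of degree $\ge n-h$ is a neighbour of $v$. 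Then $d(v)\ge h+1$, contradicting $d(v)=h$.

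The one step with real content is this final move: once the Chv\'atal alternative is pushed to $d_{n-h}\ge n-h$, it manufactures $h+1$ high‑degree vertices, and the maximality clause forces all of them into the neighbourhood of the low‑degree endpoint $v$, overshooting its degree. The rest is bookkeeping --- monotonicity of the sorted degree sequence under edge addition, checking that $i\mapsto v_{i+1}$ and $i\mapsto v_i$ count $d(v)$ and $d(u)$ over the stated index ranges (here one uses $uv\notin E(G)$ to keep $v_i\ne u$), and noting $h\ge 1$ automatically since $v_1v_2\in E(G)$. I do not expect a genuine obstacle; alternatively one could bypass the maximal‑counterexample setup by invoking the Bondy--Chv\'atal closure and showing the hypothesis forces $Cl(G)=K_n$, but the version above is self‑contained.
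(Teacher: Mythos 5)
The paper offers no proof of \cref{t1}: it is stated as Chv\'atal's theorem and cited from \cite{c}, so there is nothing internal to compare against. Your argument is a correct and complete rendition of the classical proof of that theorem --- preservation of the degree condition under edge addition, an edge-maximal non-Hamiltonian counterexample, the crossing argument along a Hamiltonian $u$-$v$ path giving $d(u)+d(v)\le n-1$ and hence $h=\min\{d(u),d(v)\}<n/2$ with $d_h\le h$, and then the forced alternative $d_{n-h}\ge n-h$. Your only deviation from the textbook ending is cosmetic: rather than exhibiting a single vertex of degree at least $n-h$ that is non-adjacent to $u$ and contradicting the maximality of $d(u)+d(v)$, you push all $h+1$ high-degree vertices into the neighbourhood of $v$ and contradict $d(v)=h$; both closings rest on the same two inequalities and are equally valid.
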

It is easy to observe that a graph $G$ has a Hamiltonian path if and only if the graph $G\vee K_1$ has a Hamiltonian cycle, where $G\vee K_1$ is the graph obtained from $G$ by joining a new vertex $w$ to each vertex of $G$. Based on this observation, the next theorem analogous to the \cref{t1}, gives a sufficient condition for the existence of a Hamiltonian path in $G$.\par
\begin{theo}[\cite{we}]\label{t2}
Let $G$ be a graph of order $n\geq 2$, the degree $d_i$ of whose vertices satisfy $d_1\leq d_2\leq \cdots \leq d_n$. If for any $k< \frac{n+1}{2}$, we have $d_k\geq k\ or\ d_{n+1-k}\geq n-k$, then $G$ has a Hamiltonian path.
\end{theo}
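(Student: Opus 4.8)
The plan is to reduce Theorem~\ref{t2} to Chv\'atal's theorem (\cref{t1}) by means of the observation already recorded in the text: $G$ has a Hamiltonian path if and only if $H:=G\vee K_1$ has a Hamiltonian cycle. Concretely, if $u_1u_2\cdots u_n$ is a Hamiltonian path of $G$ and $v$ is the apex vertex of $H$, then $u_1u_2\cdots u_n v u_1$ is a Hamiltonian cycle of $H$; conversely, deleting $v$ from any Hamiltonian cycle of $H$ leaves a Hamiltonian path of $G$. So it suffices to show that the hypothesis of \cref{t2} on $G$ forces $H$ to satisfy the hypothesis of \cref{t1}.

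First I would write down the degree sequence of $H$, which has order $n+1$. The apex vertex $v$ has $\deg_H(v)=n$, while each vertex $u$ of $G$ has $\deg_H(u)=\deg_G(u)+1\le n$. Hence $v$ attains the maximum degree in $H$, and if $f_1\le f_2\le\cdots\le f_{n+1}$ is the non-decreasing degree sequence of $H$, then $f_i=d_i+1$ for $1\le i\le n$ and $f_{n+1}=n$; this arrangement is legitimate precisely because $d_n\le n-1$, so that $d_n+1\le n$ does not overtake $\deg_H(v)$.

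Next I would transport the two disjuncts. The Chv\'atal condition for $H$ requires: for every integer $k<\frac{n+1}{2}$, either $f_k\ge k+1$ or $f_{(n+1)-k}\ge (n+1)-k$. In this range one has $1\le k\le n$ and $1\le n+1-k\le n$, so $f_k=d_k+1$ and $f_{n+1-k}=d_{n+1-k}+1$; consequently $f_k\ge k+1\iff d_k\ge k$ and $f_{n+1-k}\ge n+1-k\iff d_{n+1-k}\ge n-k$. These are exactly the two alternatives in the hypothesis of \cref{t2}. Therefore the hypothesis of \cref{t2} on $G$ is literally the hypothesis of \cref{t1} on $H$, and \cref{t1} yields a Hamiltonian cycle in $H$, hence a Hamiltonian path in $G$.

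The parts needing genuine care are minor: the bookkeeping in sorting the degree sequence of $H$ (verifying that the apex vertex sits at the top and does not perturb the index shift $f_k=d_k+1$ for $k<\frac{n+1}{2}$), and the boundary condition, since \cref{t1} applies only to graphs of order $>3$, i.e.\ requires $n\ge 3$. The case $n=2$ (where the only relevant index is $k=1$, and either alternative forces an edge, so $G=K_2$, which trivially has a Hamiltonian path) must be checked directly. I do not anticipate a real obstacle beyond this; the argument is essentially a one-to-one transfer of hypotheses through the join $G\mapsto G\vee K_1$.
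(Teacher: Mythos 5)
Your proof is correct and follows exactly the route the paper indicates: the paper states \cref{t2} as a cited result preceded by the observation that $G$ has a Hamiltonian path if and only if $G\vee K_1$ has a Hamiltonian cycle, and your argument is precisely that reduction to \cref{t1}, with the degree-sequence bookkeeping and the small-order case $n=2$ handled correctly.
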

Next, we state the theorem due to O.Ore, which provides a sufficient condition for a graph to be Hamiltonian-connected.\par
\begin{theo}[\cite{or}]\label{t3}
Let $G$ be a graph of order $n\geq 3$. If for any two non-adjacent vertices $u\ and\ v$ of $G,\ d(u)+d(v)\geq n+1,$ then $G$ is Hamiltonian-connected.
\end{theo}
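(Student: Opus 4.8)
The plan is to reduce the statement to an ordinary Hamiltonicity problem on a graph with one extra vertex, in the spirit of the observation recorded just before \cref{t2} (that $G$ has a Hamiltonian path if and only if $G\vee K_1$ has a Hamiltonian cycle). Fix two distinct vertices $u,v$ of $G$; I want to build a Hamiltonian $u$-$v$ path. Let $G'$ be the graph on $V(G)\cup\{w\}$ in which $w$ is a new vertex joined to exactly $u$ and $v$, so $|V(G')|=n+1$. Since $d_{G'}(w)=2$, every Hamiltonian cycle of $G'$ must use both edges $wu$ and $wv$; deleting $w$ from such a cycle therefore leaves precisely a Hamiltonian $u$-$v$ path of $G$, and conversely every Hamiltonian $u$-$v$ path of $G$ closes up through $w$ to a Hamiltonian cycle of $G'$. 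So it suffices to prove that $G'$ is Hamiltonian.

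The key step is to compute the Bondy--Chv\'atal closure $Cl(G')$ and check that it is complete; since $G'$ is Hamiltonian if and only if $Cl(G')$ is (see \cite{bc}, recalled in the introduction), that finishes the proof. First I would note that for any two vertices $x,y\in V(G)$ non-adjacent in $G$, the hypothesis gives
\[
d_{G'}(x)+d_{G'}(y)\ \ge\ d_G(x)+d_G(y)\ \ge\ n+1\ =\ |V(G')|,
\]
so the closure operation joins $x$ and $y$; doing this for every such pair --- legitimate, since adding edges only raises degrees and hence never disqualifies an admissible pair --- makes $V(G)$ span a clique $K_n$ in the current graph. After that the only surviving non-edges are the pairs $wx$ with $x\in V(G)\setminus\{u,v\}$; but now $d(w)\ge 2$ and $d(x)\ge n-1$ (as $x$ is joined to the other $n-1$ vertices of $V(G)$), so the degree sum is again at least $n+1$ and the closure joins these as well. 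Hence $Cl(G')=K_{n+1}$, which is Hamiltonian, so $G'$ is Hamiltonian, $G$ has a Hamiltonian $u$-$v$ path, and (as $u,v$ were arbitrary) $G$ is Hamiltonian-connected.

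The point I expect to need the most care is the auxiliary vertex $w$: because $d_{G'}(w)=2$, a naive attempt to feed $G'$ into a degree-sum Hamiltonicity criterion stumbles on the pairs $\{w,x\}$, which would require $d_G(x)\ge n-1$ for a general $x$ --- false in general. The closure argument circumvents this precisely by ordering the steps so that all of $V(G)$ is saturated first, after which each $x\in V(G)\setminus\{u,v\}$ has enough degree to be joined to $w$; getting this ordering right is the crux. It is also worth recording, as a sanity check, that the hypothesis forces the minimum degree of $G$ to be at least $2$ (a vertex of degree at most $1$ would have a non-neighbour of degree at least $n$, impossible), so $G'$ has minimum degree exactly $2$. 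A self-contained alternative would be a direct extremal argument: take $G$ edge-maximal subject to having no Hamiltonian $u$-$v$ path, choose a non-edge $ab$, note that $G+ab$ has a Hamiltonian $u$-$v$ path necessarily using $ab$, and derive a contradiction by a crossing-neighbours count of the kind used in Ore's original proof --- but the fixed endpoints $u,v$ introduce extra boundary cases there, so the closure route looks cleaner.
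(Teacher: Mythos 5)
Your proposal is correct. Note first that the paper itself gives no proof of this statement: it is quoted as a known theorem of Ore \cite{or}, so there is no internal argument to compare against. Your route is a legitimate self-contained derivation and is genuinely different from Ore's original direct argument (an extremal/crossing-neighbours count along a longest path with fixed endpoints): you reduce Hamiltonian-connectedness to plain Hamiltonicity by attaching an auxiliary vertex $w$ of degree $2$ joined to the prescribed endpoints $u,v$, and then finish with the Bondy--Chv\'atal closure, which the paper has already recalled in the introduction. The details check out: in $G'$ every non-adjacent pair $x,y\in V(G)$ has $d_{G'}(x)+d_{G'}(y)\geq d_G(x)+d_G(y)\geq n+1=|V(G')|$, and since degrees only grow during the closure process these pairs may all be joined first, making $V(G)$ a clique; afterwards each $x\in V(G)\setminus\{u,v\}$ has degree at least $n-1$ while $d(w)\geq 2$, so the remaining pairs $wx$ are also joined, giving $Cl(G')=K_{n+1}$, hence $G'$ Hamiltonian, and since $d_{G'}(w)=2$ the cycle must pass through $uw$ and $wv$, leaving a Hamiltonian $u$-$v$ path of $G$. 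Your ordering concern is handled correctly (one valid sequence of admissible additions reaching the complete graph suffices, by order-independence of the closure or by applying the Bondy--Chv\'atal edge-addition lemma backwards along the sequence), and the small cases are fine since $n+1\geq 4$. What your approach buys is brevity and reuse of cited machinery; Ore's original proof buys independence from the closure theorem, but there is no gap in what you wrote.
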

It can be easily checked that the square of a tree of order at least 4 is 
not necessarily Hamiltonian. But M.Sekanina~\cite{se} and others 
proved that for any connected graph $G$, the \emph{cube} of $G$ is 
Hamiltonian-connected. Although  Nash-Williams and M.D. 
Plummer conjectured that for any 2-\emph{connected} graph $G$, the graph 
$G^2$ is Hamiltonian.  Fleischner~\cite{f} verified this conjecture. The 
work of Fleischner was strengthened by Chartrand, Hobbs, Jung, Kapoor and 
Nash-Williams\cite{ch} showing that the square of a 2-\emph{connected} graph is 
Hamiltonian-connected.\par
\begin{theo}[\cite{ch}]\label{t4}
Let $G$ be a 2-connected graph of order at least 3. Then $G^2$ is Hamiltonian-connected.
\end{theo}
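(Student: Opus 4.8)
The plan is to prove the statement in a strengthened, induction-ready form, essentially re-entering Fleischner's proof of the Hamiltonicity of the square of a $2$-connected graph rather than using that theorem as a black box. Fix distinct $u,v\in V(G)$; the goal is a Hamiltonian $u$-$v$ path in $G^{2}$. One is tempted to reduce directly to Fleischner's theorem, since $G^{2}$ has a Hamiltonian $u$-$v$ path if and only if $(G^{2})+z$ has a Hamiltonian cycle, where $z$ is a new vertex adjacent only to $u$ and $v$. But $(G+z)^{2}$ is strictly larger than $(G^{2})+z$: in $(G+z)^{2}$ the vertex $z$ is adjacent to everything in $N_{G}(u)\cup N_{G}(v)$ and so has degree exceeding $2$ (as $G$ is $2$-connected of order at least $3$), whence a Hamiltonian cycle supplied by Fleischner's theorem may ``short-cut'' through $z$ and fail to yield the path we want. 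So a direct appeal does not suffice.

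First I would make the standard reductions of Fleischner's argument --- in particular, one may assume every vertex of $G$ has degree at least $3$, suppressing degree-$2$ vertices and lifting the resulting Hamiltonian path back afterwards --- and then induct along an open ear decomposition $G=C_{0}\cup P_{1}\cup\cdots\cup P_{m}$, where $C_{0}$ is a cycle and each $P_{i}$ is an ear. The base case $G=C_{0}=C_{n}$, $n\ge 3$, is handled directly: $C_{n}^{2}$ is complete for $n\le 4$ and, for $n\ge 5$, an explicit $u$-$v$ Hamiltonian path is exhibited. For the inductive step, let $H=C_{0}\cup P_{1}\cup\cdots\cup P_{m-1}$ and let the last ear $P_{m}$ have endpoints $x,y\in V(H)$ and new internal vertices $w_{1},\dots,w_{t}$ (listed along $P_{m}$ from $x$ to $y$); if $t=0$ there is nothing to insert. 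Otherwise, in $G^{2}$ the vertex $w_{1}$ is adjacent to $w_{2}$ and to every vertex of $N_{G}(x)$, and symmetrically $w_{t}$ to $w_{t-1}$ and $N_{G}(y)$; so the idea is to take a Hamiltonian $u'$-$v'$ path of $H^{2}$ furnished by the inductive hypothesis (for appropriate $u',v'$), find where it passes through the neighbourhood of $x$ (or of $y$), and reroute it there so that it runs out along $P_{m}$, traverses $w_{1},\dots,w_{t}$, and returns. The configurations where $u$ or $v$ is an internal vertex of $P_{m}$, and where the ear is short ($t=1$), are treated separately.

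The main obstacle --- and the reason the bare statement cannot be the induction hypothesis --- lies in choosing what additional data the inductive claim must carry so that the ``absorb an ear'' step always goes through: one has to control which edges at $u'$ and $v'$ are used by the Hamiltonian path and how the path behaves in the neighbourhood of each prospective ear-attachment vertex, while simultaneously covering the degenerate situations (when $u'$ or $v'$ is itself an attachment vertex, when $u'v'$ is already an edge, when $\{u',v'\}$ meets a short ear, and so on). Formulating a technical hypothesis that is at once strong enough to survive the addition of an ear and weak enough to hold for the base cycle is precisely the heart of the argument --- and precisely the role played by the Fleischner-style inductive statement in \cite{ch}. An alternative I would consider is to deduce the theorem from a strong edge-version of Fleischner's theorem (say, that every edge of a $2$-connected graph lies on a Hamiltonian cycle of its square, or a two-edge strengthening of this), but proving such a refinement again calls for essentially the same machinery.
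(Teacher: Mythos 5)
The paper never proves this statement: Theorem~\ref{t4} is quoted as a known result of Chartrand, Hobbs, Jung, Kapoor and Nash-Williams \cite{ch}, strengthening Fleischner's theorem \cite{f}, and is then used as a black box in the proof of Theorem~\ref{t5}. So there is no in-paper argument to compare with, and your proposal has to stand on its own as a proof --- which, as written, it does not.

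What you have is a correct diagnosis plus a plan, with the decisive step missing. You rightly note that the naive reduction (add $z$ adjacent to $u,v$ and apply Fleischner to $(G+z)^2$) fails because $(G+z)^2$ is larger than $(G^2)+z$, and you rightly point to an induction over an ear decomposition with a strengthened hypothesis. But you never state that hypothesis and never verify the ear-absorption step; you say yourself that formulating a hypothesis strong enough to survive adding an ear yet weak enough to hold for the base cycle is ``precisely the heart of the argument.'' That heart is the entire content of the theorem, so the proposal is a programme rather than a proof. Moreover, the rerouting step as described can genuinely fail: the Hamiltonian $u'$-$v'$ path of $H^2$ given by induction may pass through the vertices of $N_G(x)$ only along edges of $H^2$ that are distance-two shortcuts not going through $x$, and then there is no legal place in $G^2$ to splice in $w_1,\dots,w_t$; this is exactly why Fleischner-type inductive statements insist that certain edges of the path or cycle incident with designated vertices be edges of $G$ itself. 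The preliminary reduction to minimum degree $3$ by suppressing degree-$2$ vertices also needs justification, since suppression changes distances and hence changes the square. To make this a genuine proof you must either carry out the strengthened induction in full (essentially reproving \cite{ch}) or quote precisely a refined form of Fleischner's theorem (Hamiltonian cycles in the square using prescribed original edges at prescribed vertices) and show how Hamiltonian-connectedness follows from it via a suitable auxiliary $2$-connected graph built from $G$, $u$ and $v$; gesturing at either route is not enough.
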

A self complementary graph of order $n\geq 5$, may not be 2-connected. 
For example, the following self complementary graph is not 2-connected.
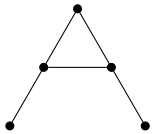
\begin{figure}[H]
\centering
\begin{tikzpicture}[scale=.90]
{\tikzstyle{every node}=[circle, draw, fill=black,
                        inner sep=0pt, minimum width=3pt]

\draw (0:0)node{}--++(60:1)node(a){}--++(60:1)node{}--++(-60:1)node(b){}--++(-60:1)node{} (a)--(b);
}
\end{tikzpicture}
\caption{Self-complementary graph on 5 vertices which is not 2-connected.}
\end{figure}
In this context we prove the following result.

\begin{theo}\label{t5}
If $G$ is self-complementary graph of order $n>3$, then $G$ has a Hamiltonian path. Moreover, if $n\geq 5$, then $G^2$ is Hamiltonian connected.
\end{theo}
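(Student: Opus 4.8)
\bigskip
\noindent\textbf{Proof strategy.}

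The first assertion is an immediate consequence of \cref{t2}. Recall first that a self-complementary graph is connected (the complement of a disconnected graph is connected, so a disconnected graph cannot be isomorphic to its complement) and that its degree sequence is \emph{symmetric}: if $d_1\le d_2\le\cdots\le d_n$ are the vertex degrees and $\phi:G\to\overline G$ is a self-complementing isomorphism, then $\deg_G(v)=n-1-\deg_G(\phi(v))$, and comparing the sorted degree sequences of $G$ and $\overline G$ gives $d_i+d_{n+1-i}=n-1$ for every $i$. Now fix $k<\frac{n+1}{2}$. If $d_k<k$, i.e.\ $d_k\le k-1$, then $d_{n+1-k}=n-1-d_k\ge n-k$. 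So for every such $k$ we have $d_k\ge k$ or $d_{n+1-k}\ge n-k$, and \cref{t2} produces a Hamiltonian path in $G$ (note $n>3\ge 2$).

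For the second assertion, recall the classical fact that a self-complementary graph $G$ satisfies $\operatorname{diam}(G)\in\{2,3\}$: it is not complete, so $\operatorname{diam}(G)\ge 2$; and if $d_G(u,v)\ge 3$ for some pair $u,v$, then no vertex is adjacent in $G$ to both $u$ and $v$, so $\{u,v\}$ dominates $\overline G$ and $uv\in E(\overline G)$, forcing $\operatorname{diam}(\overline G)\le 3$ and hence $\operatorname{diam}(G)\le 3$. If $\operatorname{diam}(G)=2$ then $G^2=K_n$, which is Hamiltonian-connected, so we may assume $\operatorname{diam}(G)=3$. If in addition $G$ is $2$-connected, then $G^2$ is Hamiltonian-connected by \cref{t4}. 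It therefore remains to handle the case in which $G$ is self-complementary of diameter $3$ with a cut vertex $c$; this is the heart of the matter.

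In this remaining case the plan is to exploit how restrictive self-complementarity becomes in the presence of a cut vertex. Put $A=N_G(c)$ and $B=V(G)\setminus(\{c\}\cup A)$. Since $\overline G-c=\overline{G-c}$ is connected, $c$ is \emph{not} a cut vertex of $\overline G\cong G$; in particular $c$ is not dominating and $B\neq\varnothing$. Using this together with $\operatorname{diam}(G)=3$ one shows that $G-c$ has exactly one component $C_0$ meeting $B$ (indeed $B\subseteq C_0$), that every other component of $G-c$ lies inside $A$, and that every vertex of $C_0$ is within distance $2$ of $A\cap C_0$ inside $C_0$. Pushing this further, again with the degree symmetry $d_i+d_{n+1-i}=n-1$ and the observation that a self-complementing isomorphism carries the low-degree vertices forced by the cut onto near-universal vertices, one pins down the block--cut structure of $G$ (its block--cut tree is a path with tightly constrained end blocks). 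With $G$ described this explicitly, one finishes in either of two equivalent ways: verify Ore's condition $d_{G^2}(u)+d_{G^2}(v)\ge n+1$ for every pair $u,v$ non-adjacent in $G^2$ (equivalently, at distance $3$ in $G$) and invoke \cref{t3}; or, using $\operatorname{diam}(G^2)\le 2$ and the explicit structure, exhibit a Hamiltonian $u$--$v$ path in $G^2$ directly.

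I expect the cut-vertex case to be the main obstacle: a self-complementary graph with a cut vertex need not be $2$-connected, so \cref{t4} is unavailable, and one must both determine the structure of such graphs accurately and then produce the Hamiltonian paths (or the degree estimate feeding Ore's condition). The hypothesis $n\ge 5$ is genuinely needed here: the unique self-complementary graph on $4$ vertices is $P_4$, whose square is $K_4$ minus an edge and is \emph{not} Hamiltonian-connected. Thus the argument must show that for $n\ge 5$ the relevant bound improves just enough---e.g.\ that in a self-complementary graph on $n\ge 5$ vertices no vertex has more than $\frac{n-3}{2}$ vertices at distance $3$ from it, which already forces $\delta(G^2)\ge\frac{n+1}{2}$ and hence Ore's condition for $G^2$.
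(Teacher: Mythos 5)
Your first paragraph is fine and matches the paper: the symmetry $d_i+d_{n+1-i}=n-1$ of the degree sequence plus \cref{t2} gives the Hamiltonian path, and the reductions $\operatorname{diam}(G)=2\Rightarrow G^2=K_n$ and ``$2$-connected $\Rightarrow$ \cref{t4}'' are exactly the paper's first steps. But the remaining case --- $G$ self-complementary of diameter $3$ with a cut vertex --- is where the theorem actually lives, and there you have only a plan, not a proof. Phrases like ``one shows,'' ``pushing this further \ldots one pins down the block--cut structure,'' and ``the argument must show \ldots e.g.\ that no vertex has more than $\frac{n-3}{2}$ vertices at distance $3$'' assert precisely the facts that need to be established; none of them is verified, and the final quantitative claim ($\delta(G^2)\ge\frac{n+1}{2}$) is offered only as something the argument ``must'' deliver. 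As written, the cut-vertex case is a gap, and you say yourself it is the main obstacle.

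For comparison, the paper closes this case with one short structural observation that your sketch is missing: if $v$ is a cut vertex of $G$ with components $G_1,\dots,G_k$ of $G\setminus\{v\}$, then in $\overline G$ every vertex of $G_i$ is adjacent to every vertex of $G_j$ ($i\ne j$), so $\overline G^2\setminus\{v\}$ is already complete, i.e.\ $K_{n-1}$. Since $G\cong\overline G$, it suffices to prove $\overline G^2$ (or, in one sub-case, $G^2$) Hamiltonian-connected, and after this observation Ore's condition (\cref{t3}) only has to be checked for pairs involving the single vertex $v$: if $d_{\overline G^2}(v)\ge 3$ one gets $d_{\overline G^2}(u)+d_{\overline G^2}(v)\ge (n-2)+3=n+1$ at once, and the paper then disposes of the two residual possibilities $d_{\overline G}(v)=1$ and $d_{\overline G}(v)=2$ by a short analysis (in the first, the unique $\overline G$-neighbour $x$ of $v$ leads to a vertex $z$ that is a cut vertex of $\overline G$, and the same complement trick applied back to $G$ gives $G^2\setminus\{z\}=K_{n-1}$ and Ore's bound; in the second, non-cut-vertexhood of $v$ in $\overline G$ forces $d_{\overline G^2}(v)\ge 3$). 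If you want to salvage your proposal, replace the block--cut-tree program with this complement argument, or else actually prove your claimed distance-$3$ bound; at present neither is done.
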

\begin{proof}
The first part of the theorem is known \cite{glp}. But, for the sake of completeness  we give here a new proof. Let the degree sequence of $G$ be $d_1,d_2,...,d_k,...,d_n$ satisfying the condition $d_1\leq d_2\leq ...\leq d_k\leq...\leq d_n$, where $d_i$ is the degree of the vertex $v_i$. Then the degrees of $\overline{G}$ satisfy the condition \\
				\begin{equation}\label{e1}
				n-1-d_n\leq n-1-d_{n-1}\leq ...\leq n-1-d_k\leq ...\leq n-1-d_2\leq n-1-d_1
				\end{equation}
    Now, for any $k<\frac{n+1}{2}$, let $d_k\leq k-1$. Then 
    \begin{equation}\label{e2}
        n-1-d_k\geq n-k
    \end{equation}
    Again \cref{e1} can be written as \\
    $n-1-d_{(n+1)-1}\leq n-1-d_{(n+1)-2}\leq ...\leq n-1-d_{(n+1)-k}\leq ...\leq n-1-d_{(n+1)-(n-1)}\leq n-1-d_{(n+1)-n}$\\
    Because $G$ is self-complementary, $d_k=n-1-d_{n+1-k}$. So $d_{n+1-k}=n-1-d_k$. Therefore from \cref{e2}, $d_{n+1-k}\geq n-k$. Therefore from \cref{t2}, $G$ has a Hamiltonian path.
    
 We can also conclude that $G$ and hence $\overline{G}$ are connected.\par
To prove the second part, first we shall state the following known result.\par
\begin{lem}$\cite{ts}$\label{l1}
If $G$ is a self-complementary graph of order at least 5, then $diam(G)=2\ or\ 3$.
\end{lem}

Now we shall prove the second part of the \cref{t5}.\par \vspace{.3cm}
Let $G$ be a self complementary graph of order $n\geq 5$. We shall prove that $G^2$ is Hamiltonian-connected.\par

First, suppose $diam(G)=2$. Then $G^2$ is complete and hence $G^2$ is 
Hamiltonian-connected. Next, suppose $diam(G)=3$. If $G$ is 
2-\emph{connected} then by \cref{t4}, $G^2$ is Hamiltonian-connected. So, 
we assume $G$ has a cut-vertex $v$ and $G_1,G_2,\cdots,G_k$ are the 
components of $G\setminus\{v\}$. Let $G_i$ and $G_j$ be any two 
components of $G\setminus\{v\}$. Then every vertex of $G_i$ is adjacent 
to every vertex of $G_j$ in $\overline{G}$. Thus we conclude that 
$\overline{G}^2\setminus\{v\}=K_{n-1}$. Now $d_{G}(v)\geq 2$, then if 
$d_{\overline{G}}(v)\geq 3$, we consider two non-adjacent vertices of 
$\overline{G}^2$. One of these two vertices must be $v$ and, say, another 
vertex is $u$, then $d_{\overline{G}^2}(u) = n-2$. Since 
$d_{\overline{G}}(v)\geq 3$ implies $d_{\overline{G}^2}(v)\geq 3$, we 
have for non-adjacent vertices $u$ and $v$, $d_{\overline{G}^2}(u)+d_{\overline{G}^2}(v)\geq n-2+3=n+1$. Therefore by \cref{t3}, 
$\overline{G}^2$ is Hamiltonian-connected. Now, we examine the other possibilities.\par\vspace{.2cm}

\begin{case}\label{c1} Let $d_{\overline{G}}(v)=1$. Thus $d_{G}(v)=n-2$ and there exists only one 
vertex $x$ such that $vx\in E(\overline{G})$. Since 
$diam(\overline{G})=3$,  every vertex  in $\overline{G}$ is at a 
distance at most 2 from $x$. Now, if $d_{\overline{G}}(x)\geq 3$ then 
$d_{\overline{G}^2}(v)\geq 3$. Then, as before $\overline{G}^2$ is 
Hamiltonian-connected. Therefore assume $d_{\overline{G}}(x)=2$, where 
$N_{\overline{G}}(x)=\{v,z\}$ ( Figure 2). Since $diam(\overline{G})=3$, every 
vertex of $V(G)\setminus\{v\}$ is adjacent to $z\ in\ \overline{G}$. Thus 
$d_{\overline{G}}(z)=n-2$ and no other vertex in $\overline{G}$ is of 
degree $n-2$. Since $G$ is a self complementary graph and $z$ is a cut 
vertex of $\overline{G}$, as before ${G}^2\setminus\{z\}=K_{n-1}$. Again, 
all the vertices except $x$, that are adjacent to $z$ in $\overline{G}$ are also 
adjacent to $v$ in $G$. This implies $d_{{G}^2}(z)=n-2\geq 3$, as 
$n\geq 5$. So in ${G}^2$, $d_{{G}^2}(z)+d_{{G}^2}(u)\geq n-2+3 = n+1$, 
where $u$ is any other vertex of $G$ other than $z$. 
Hence $G^2$ is Hamiltonian-connected.
\end{case}

\begin{figure}[H]
\centering
\begin{tikzpicture}[scale=1.5]
{\tikzstyle{every node}=[circle, draw, fill=black,
                        inner sep=0pt, minimum width=3pt]

\draw (0:0)node(a)[label={[label distance=1pt]180:${v}$}]{}--++(0:1)node[label={[label distance=1pt]-90:${x}$}]{}--++ (0:1)node(b)[label={[label distance=1pt]-120:${z}$}]{}--++(40:1.2)node(c){} (b)--++(60:2)node(d){} (b)--++(-60:2)node(e){} ;
\draw[dashed] (a)to[out=80,in=160](d) (a)to[out=50,in=130](b) (a)to[out=-80,in=-160](e) (a)to[out=60,in=150](c);
}

\draw (b)--++(0:1.2)node(f)[ellipse, fill=white,inner sep=1pt, draw]{$\vdots$};
\draw[dashed]    (a)to[out=-60,in=-120](f);

\end{tikzpicture}
\caption{A possible figure in $\overline{G}$, where doted edges represent the edges of $G$.}
\label{f1}
\end{figure}
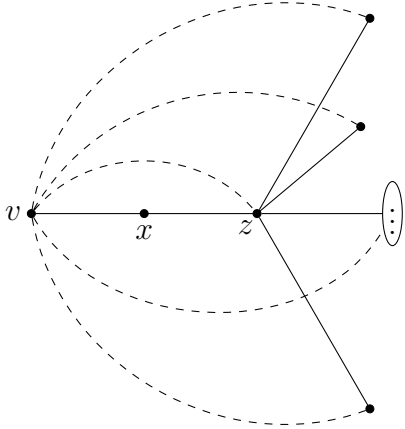

\begin{case}Let $d_{\overline{G}}(v)=2$. Since $v$ is a cut 
vertex of $G$, $v$ is not a cut vertex of $\overline{G}$. Now, assume 
$N_{\overline{G}}(v)=\{a,b\}$. Since $v$ is not a cut vertex of 
$\overline{G}$, at least one vertex of $a$ and $b$, say $a$, has a 
neighbour other than $b$ and $v$ in $\overline{G}$. This implies 
$d_{\overline{G}^2}(v)\geq 3$. Again, as before $\overline{G}^2\backslash v= K_{n-1}$. So for any two vertices $x, y$ of $G$ other than $v$, we have $d(x)+d(y)= 2n-4$ in $K_{n-1}$. As $n\geq 5, 2n-4\geq n+1$, i.e. $d_{\overline{G}^2}(x) + d_{\overline{G}^2}(y)$$\geq n+1$. Finally,  $d_{\overline{G}^2}(v) + d_{\overline{G}^2}(x)$$\geq n-2+3= n+1$, where $x\in V-\{v\}$. Hence $\overline{G}^2$ is Hamiltonian-connected and accordingly $G^2$ is 
Hamiltonian-connected. This completes the proof of the \cref{t5}.\qedhere
\end{case}
\end{proof}

\section{Mycielski's Graphs}
Let $G=(V,E)$ be a connected graph with vertex set $V=\{ v_1,v_2,\cdots ,v_n \}$. The Mycielski's graph $\mu (G)$ of $G$ is a graph with the vertex set $X\cup Y \cup \{z\}$, where $X=\{ x_1,x_2,\cdots , x_n \}$ , $Y=\{ y_1,y_2,\cdots , y_n \}$ , and the vertices of $X$ induce $G$. The new edges in $\mu (G)$ are $zy_i$ for all $i$ also  $x_iy_j$ is an edge of $\mu (G)$ if $x_ix_j$ is an edge of $G$.
For example, $\mu (K_2)=C_5$ and $\mu (C_5)$ is the Gr\"{o}tzsch graph(Figure 3).

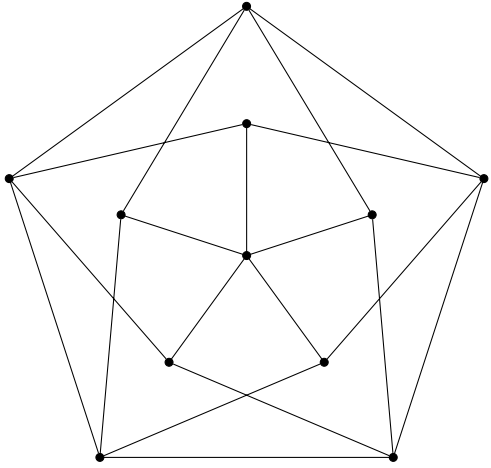
\begin{figure}[H]
\centering
\begin{tikzpicture}[scale=3.9]
{\tikzstyle{every node}=[circle, draw, fill=black,
                        inner sep=0pt, minimum width=3pt]
\draw (0:0) node(a) {} --++ (0:1) node(b) {}--++(-108:-1)node(c){}--++(144:1)node(d){}--++(36:-1)node(e){}--(a);
\path [name path=a1] (a) --++ (54:1.5);
\path [name path=d1] (d) --++ (-90:1.5);
\path [name intersections={of=a1 and d1, by=f}];
\draw (f)node{}--++(90:.45)node(g){} (f)--++(162:.45)node(h){} (f)--++(234:.45)node(i){} (f)--++(306:.45)node(j){} (f)--++(378:.45)node(k){};
\draw (h)--(a)--(j) (i)--(b)--(k) (j)--(c)--(g) (k)--(d)--(h) (g)--(e)--(i);
}
\end{tikzpicture}
\caption{The Gr\"{o}tzsch Graph.}
\end{figure}

 Thus starting with $M_2=K_2$, the Mycielski's graph $M_n$ are formed by iterating $\mu $ successively $(n-2)$ times on $K_2$. Thus $M_3=\mu (K_2)$,  $M_4=\mu (M_3)$ and in general $M_{k+1}=\mu (M_k)$. \\Mycielski's  graphs are of interest because if $G$ is triangle-free then  $\mu (G)$ is also triangle-free and if the chromatic number $\chi (G)$ of $G$ is $k$ , then $\chi (\mu (G))=k+1$.\par
Now, $M_3$ is $C_5$; also it can be observed that $M_4$, the Gr\"{o}tzsch graph is Hamiltonian. Fisher et al.~\cite{fm} proved that if $G$ is Hamiltonian then $\mu (G)$ is also Hamiltonian. Jarnicki et al.~\cite{jm} have extended this  result and proved that $\mu (G)$ is Hamiltonian-connected if $G$ is Hamiltonian and of odd order. In this paper we prove that each Mycielski's graph is a $k$-critical graph. Finally, we shall prove a conjecture posed by  Jarnicki et al.~\cite{jm} regarding Hamiltonian-connectedness of Mycielski's graphs.
First, we shall prove the following proposition.
\begin{prop}
If $P_n$ is a path on $n$ vertices, then $\mu (P_n)$ has a Hamiltonian path.
\end{prop}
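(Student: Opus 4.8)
The idea is to construct an explicit Hamiltonian path in $\mu(P_n)$ by exploiting the layered structure of the Mycielski construction. Label the path $P_n$ as $x_1x_2\cdots x_n$, so the shadow vertices are $y_1,\dots,y_n$, and $z$ is the apex. Recall that $y_i$ is adjacent to $x_{i-1}$ and $x_{i+1}$ (whenever these exist) and to $z$, while the $y_i$'s are mutually non-adjacent. The key structural observation is that from an $x$-vertex we can ``drop'' to a neighbouring $y$-vertex and then either go to $z$ or hop back up to another $x$-vertex two steps away; this gives enough flexibility to weave all three layers into a single path.

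First I would handle a small base case (say $n=2$, where $\mu(P_2)=\mu(K_2)=C_5$ which is visibly traceable, and perhaps $n=3,4$) directly. For the general construction I would try to walk along the $x$-layer while periodically detouring through $y$-vertices, placing $z$ somewhere in the middle of a detour. A concrete attempt: start at $y_1$, go to $x_2$, then traverse $x_2 x_3 \cdots x_n$ along the path, then from $x_n$ drop to $y_{n-1}$; from $y_{n-1}$ we can reach $z$, and from $z$ we can reach any remaining $y_j$. The remaining unused vertices at that point are $x_1$ and $y_2,y_3,\dots,y_{n-2}$ together with $y_n$. Since $x_1$ is adjacent to both $y_2$ and (in $P_n$) to $x_2$, and since every $y_j$ is joined to $z$, one should be able to finish: e.g. route $z\to y_n$ is a dead end unless $n$ is small, so more likely the ordering should be arranged so that $z$ is visited just before the last $y$-vertex or so that $x_1$ is an endpoint. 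I would organize the construction so that one endpoint of the Hamiltonian path is forced to be $x_1$ (its only neighbours are $x_2$ and $y_2$), use $x_1\to y_2\to x_3\to y_4\to x_5\to\cdots$ as a zig-zag that alternately picks up odd-indexed $x$'s and even-indexed $y$'s, then turn around and pick up the rest, inserting $z$ between two consecutive $y$-vertices along the way. A parity split (cases $n$ even versus $n$ odd) will almost certainly be needed, and within each case one writes down the explicit vertex sequence and checks consecutive adjacencies.

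The main obstacle is bookkeeping: making sure the explicit zig-zag sequence actually covers \emph{every} vertex exactly once and that each consecutive pair is genuinely an edge of $\mu(P_n)$ — in particular the transitions $x_i \to y_{i\pm 1}$, $y_i \to z$, and the two ``boundary'' vertices $y_1,y_n$ (which have only one $x$-neighbour each) and $x_1,x_n$ need individual attention. The single vertex $z$ must be spliced in at a point where it is flanked by two $y$-vertices (or is an endpoint adjacent to one $y$-vertex), and arranging this without stranding $y_1$ or $y_n$ is the delicate part. I expect the clean statement to be: list the path explicitly as, say,
\[
x_1,\; y_2,\; x_3,\; y_4,\; \ldots,\; z,\; \ldots,\; y_3,\; x_2,\; y_1
\]
with the exact interleaving depending on the parity of $n$, and then the verification of adjacencies is routine once the pattern is fixed.
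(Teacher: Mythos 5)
Your strategy is the same one the paper uses: an explicit zig-zag through odd-indexed $x$'s and even-indexed $y$'s, a turn-around to collect the complementary vertices, $z$ spliced in flanked by $y$-vertices, and a split according to the parity of $n$. However, as written there is a genuine gap: you never fix the explicit sequence, and both concrete orderings you do sketch fail at the right-hand end. The first attempt (start $y_1, x_2, x_3,\dots,x_n, y_{n-1}, z,\dots$) dead-ends, as you yourself observe. The final template $x_1, y_2, x_3, y_4,\dots, z,\dots, y_3, x_2, y_1$ strands $x_n$: the only neighbours of $x_n$ in $\mu(P_n)$ are $x_{n-1}$ and $y_{n-1}$, so a Hamiltonian path that does not end at $x_n$ must use both of these edges at $x_n$, and a descending zig-zag $z, y_{n-1}, x_{n-2}, y_{n-3},\dots$ never does this (if instead you go $z, y_{n-1}, x_n$ you are stuck). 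Indeed, for $n=2$ there is no $x_1$--$y_1$ Hamiltonian path at all, since $\mu(P_2)=C_5$ and $x_1,y_1$ are non-adjacent there. So the ``routine bookkeeping'' you defer is exactly where the argument still has content.

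The repair is what the paper does: choose the endpoints so that $x_n$ is terminal or so that the edge $x_{n-1}x_n$ is used at the turning point. For $n$ even the path $x_1\, y_2\, x_3\, y_4 \dots x_{n-1}\, y_n\, z\, y_1\, x_2\, y_3 \dots y_{n-1}\, x_n$ works (it ends at $x_n$), and for $n$ odd the path $x_1\, y_2\, x_3 \dots x_{n-2}\, y_{n-1}\, x_n\, x_{n-1}\, y_n\, z\, y_1\, x_2\, y_3 \dots x_{n-3}\, y_{n-2}$ works (it passes through $x_n$ via the edge $x_n x_{n-1}$). With one of these explicit sequences written down and its consecutive adjacencies checked, your outline becomes a complete proof; without it, the claim that the pattern can be closed up is not yet established.
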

\begin{proof}
    Let $x_1 x_2 \ldots x_{n-1}x_n$ be the path $P_n$. First, suppose $n$ is odd, then $\mu (P_n)$ has a Hamiltonian path, 
     $x_1\, y_2\, x_3\, y_4 \ldots  x_{n-2}\, y_{n-1}\, x_n\, x_{n-1}\, y_n\, z\, y_1\, x_2\, y_3\, x_4 \ldots x_{n-3}\, y_{n-2}$.
    Next, if $n$ is even then 
    $x_1\, y_2\, x_3\, y_4 \ldots x_{n-1}\, y_n\, z\, y_1\, x_2\, y_3\, x_4 \ldots y_{n-1}\, x_n$
    is a Hamiltonian path in $\mu (P_n)$.
\end{proof}
A graph $G$ is \textbf{$k$-critical graph} if $\chi (G)=k$  but $\chi (H)<\chi (G)$ for every proper subgraph $H$ of $G$. Now we prove the following propositions.\par
\begin{prop}
Let $G$
 be a $k$-chromatic graph, where $k\geq 2$. Then for every vertex $v$ of $G$, $\chi (G-v)=k $ or $k-1$.
\end{prop}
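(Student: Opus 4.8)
The plan is to sandwich $\chi(G-v)$ between $k-1$ and $k$ using two elementary inequalities: one from monotonicity of the chromatic number under taking subgraphs, and one from the observation that deleting a single vertex can drop the chromatic number by at most $1$.

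First I would note that $G-v$ is a subgraph of $G$, so restricting any proper $k$-colouring of $G$ to $V(G)\setminus\{v\}$ yields a proper $k$-colouring of $G-v$; hence $\chi(G-v)\leq\chi(G)=k$. For the reverse inequality, start from an optimal proper colouring of $G-v$ with $\chi(G-v)$ colours. The vertex $v$ has some set of neighbours in $G$, and we may assign $v$ any colour not occurring on those neighbours; in the worst case this requires introducing one new colour. Thus a proper colouring of $G$ is obtained using at most $\chi(G-v)+1$ colours, so $k=\chi(G)\leq\chi(G-v)+1$, i.e.\ $\chi(G-v)\geq k-1$.

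Combining the two bounds gives $k-1\leq\chi(G-v)\leq k$, and since $\chi(G-v)$ is an integer this forces $\chi(G-v)\in\{k-1,k\}$, which is exactly the claim. There is essentially no obstacle here: the only point deserving a word of care is the colour-extension step for $v$, but that is immediate because with one fresh colour available we can always properly colour a single additional vertex regardless of its neighbourhood. (The hypothesis $k\geq2$ is only needed so that $k-1$ is a meaningful chromatic number, i.e.\ $G-v$ is nonempty of the appropriate type; it plays no further role.)
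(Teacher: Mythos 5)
Your proof is correct and rests on the same key observation as the paper's: re-inserting the single vertex $v$ costs at most one new colour, so $\chi(G)\leq\chi(G-v)+1$. The paper phrases this as a proof by contradiction (assuming $\chi(G-v)\leq k-2$) while you argue directly and also state the trivial upper bound $\chi(G-v)\leq k$ explicitly, but the substance is identical.
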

\begin{proof}
On the contrary, Let $\chi (G-v)=p \leq k-2 $. Now, if $v$ is adjacent to the vertices in $G-v$ which receives all the $p$ colors. Then we can color the vertex $v$ by $(p+1)$th color. Which implies that $\chi (G) \leq k-1$. Thus we arrive at a contradiction. Hence, $\chi (G-v)=k $ or $k-1$.
\end{proof}
\noindent\textbf{Proposition 9.} Let $G$ be a $k$-chromatic graph, where $k\geq2$. Then for every edge $e$ of $G$, $\chi(G-e)=k$ or $k-1$.\vspace{.3cm}\\
\textit{Proof.} On the contrary, assume $\chi(G-e)=p\leq k-2$ where $e=x_ix_j$. Now in $G-e$ if both $x_i$ and $x_j$ receive the $p$th color then recolor $x_i$ or $x_j$ by $(p+1)$th color. Which is a contradiction. Therefore $\chi(G-e)=k$ or $k-1$.\qed

From the above two propositions we conclude that if $G$ is $k$-critical graph, then for every vertex $v$, $\chi (G-v)=k-1$ and for every edge $e$, $\chi((G-e))=k-1$.
Now, we prove the following theorem.\par

\noindent\textbf{Theorem 10.} \textit{Every Mycielski graph $M_k$ is a $k$-critical graph for $k \geq 2$.}\\

\textit{Proof.}To prove this theorem, we use mathematical induction on $k$. We can easily see that $M_2=K_2$ and $M_3=C_3$ are respectively $2$-critical and $3$-critical graphs. Now, we assume that $M_p$ is a $p$-critical graph for $p>3$ and we shall prove that $M_{p+1}$ is $(p+1)$-critical graph.\\
Again $\chi (M_p)=p$ and suppose $V(M_p)=\{x_1,x_2,\cdots , x_n\}$ . The vertex set  $V(M_{p+1})$, of $M_{p+1}$ is $\{x_1,x_2,\cdots , x_n\}\cup \{y_1,y_2,\cdots , y_n\}\cup \{z\}$. We know that $\chi (M_{p+1})=p+1$ and we shall prove that $\chi (M_{p+1}-v)=p$ for $v \in V(M_{p+1})$. We consider the following cases.\par
\noindent\textbf{Case 1.} Let $v = z$. Since $\chi (M_p)=p$, from the construction of $M_{p+1}$, it follows that $\chi (M_{p+1}-z)=p$.\par
\noindent\textbf{Case 2.} Let $v=x_i \in V(M_p)$. Since $M_p$ is $p$-critical, $\chi (M_{p}-x_i)=p-1$. Suppose $1,2,3,\cdots ,(p-1)$ colors are needed to color the vertices $x_1,x_2,\cdots ,x_{i-1},$ $x_{i+1},\cdots x_n$.\par
Again, each vertex $y_i$ where $(1\leq i\leq n)$ is adjacent to some vertices of the set $\{ x_1, x_2, \ldots ,\\ x_{i-1}, x_{i+1}, \ldots , x_n \}$. So we can use $p$-th color to color the vertices $\{y_1,y_2,\cdots , y_n\}$. Then we can use any of the colors $1,2,3,\cdots ,(p-1)$ to color the vertex $z$. Thus $\chi (M_{p+1}-x_i)=p$.\par
\noindent\textbf{Case 3.} Let $v=y_i$. Now $x_i$ and $y_i$ are non adjacent and $x_i$ and $y_i$ have same adjacency in $V(M_{p+1})\backslash \{z\}$ for each $i\ (1\leq i\leq n)$. Since $M_p$ is $p$-critical,  $\chi (M_{p}-x_i)=p-1$ and we recolor the vertices $x_1,x_2,\cdots ,x_{i-1},x_{i+1},\cdots , x_n$ by $1,2,3,\cdots ,(p-1)$ colors. We can color the vertices $y_j$ and $x_j$ by same color for every $j \ne i$. Then we color the vertices $x_i$ and $z$ by $p$-th color. Thus  $\chi (M_{p+1}-y_i)=p$.\\
Next, we consider the edge deletion cases.\par
\noindent\textbf{Case 4.} Let $e=x_ix_j$. Since $M_p$ is $p$-critical, $\chi(M_p-e)=p-1$. Again, existance of the edge $x_ix_j$ implies the existance of the edges $x_iy_j$ and $x_jy_i$ in $M_{p+1}$. Now the vertices of $M_p-e$ can be colored by $(p-1)$ numbers of colors, say $1,2,...,(p-1)$; where $x_i$ and $x_j$ both receive the same color. Now in $M_{p+1}$ the vertices $y_1, y_2, ..., y_i, ..., y_j, ..., y_n$ are colored by $1, 2, ..., p-1,p$ colors (i.e. $p$ number of colors). Next, we delet the edges $x_iy_j$ and $x_jy_i$ from $M_{p+1}-e$. Now, since in $M_{p+1}-\{x_iy_j, x_jy_i, x_ix_j\}$ the vertices $x_k$ and $y_k$ $(1\leq k\leq n)$ have the same neighbours in the vertex set of $M_p-x_ix_j$, the vertices $y_1,y_2,...,y_n$ now can be colored by $1,2,...,(p-1)$ colors. Next, we recolor the vertices $x_i$ and $x_j$ in $M_p-e$ by the $p$th color and insert the edges $x_iy_j$ and $x_jy_i$ to obtain the graph $M_{p+1}-e$. Then the $p$th color can be used to color $z$. This proves that $\chi(M_{p+1}-e)=p$.\par
\noindent\textbf{Case 5.} Let $e=x_iy_j$. Now, color the vertices $x_1, x_2, ..., x_i, ..., x_{j-1}, x_{j+1}, ..., x_n$ of $M_{p+1}$ by $(p-1)$ colors as above and use the $p$th color to color the vertex $x_j$. Again in $M_{p+1}-e$, the edge $x_jy_i$ exists and the vertex $x_j$ is colored by $p$th color. So the vertices $y_1, y_2, ..., y_n$ can be colored by $1, 2, ..., (p-1)$ colors in $M_{p+1}-e$ as in the case $4$. Thus we can use the $p$th color to color the vertex $z$. Hence $\chi(M_{p+1}-e)=p$.\par
\noindent\textbf{Case 6.} Let $e=y_iz$. Recolor the vertices of $M_p$ by $1,2, ..., p$ colors, where $x_i$ is colored by $p$th color as $M_p-x_ix_j$ is colored by $1,2,...,(p-1)$ colors. In $M_{p+1}$ the vertices $x_k$ and $y_k (1\leq k\leq n)$ receive the same color as their adjacencies are same in $M_p$. Thus the vertex $y_i$ can also be colored by $p$th color. So in $M_{p+1}-e$, the vertex $z$ can be colored by $p$th color. Hence $\chi(M_{p+1}-e)=p$.
\par Hence $M_{p+1}$ is $(p+1)$-critical graph. This completes the proof of this theorem.\qed\\

From the proof of Theorem 10, it follows that the above result can be generalized for arbitrary $k$-critical graph $G$.

\noindent\textbf{Corollary 11.} \textit{If $G$ is any $k$-critical graph, then $\mu(G)$ is $(k+1)$-critical graph.}\vspace{.3cm}

We have mentioned earlier that Jarnickie et al.~\cite{jm} proved that if $G$ is Hamiltonian and of odd order then $\mu (G)$ is Hamiltonian-connected. They have also proved that this result does not hold if the order of $G$ is even. In this context they have conjectured the following, which we shall prove in the next theorem. \vspace{.2cm}\\
\noindent\textbf{Theorem 12.}\textit{ If the graph $G$ is Hamiltonian-connected and not $K_2$, then $\mu (G)$ is also Hamiltonian-connected.}\vspace{.3cm}\\
To prove the above theorem, first we prove the following lemmas. \vspace{.1cm}\\

\noindent \textbf{Lemma 13.} \textit{If $G$ is a Hamiltonian-connected graph, then $G$ must 
contains odd cycle.}

\noindent\textit{Proof.} If $G$ is bipartite then $G$ is not Hamiltonian-connected. Thus $G$ must contains an odd cycle.\vspace{.01cm}\\

\noindent\textbf{Lemma 14.}\textit{If $G$ is a Hamiltonian-connected graph of order $\geq4$, then degree of $v$, $d(v)\geq 3$ for all $v \in V(G)$.}

\begin{proof}
Let $n$ be the order of $G$. Suppose $v_1\ v_2\ v_3\ldots v_n\ v_1$ is a 
Hamiltonian cycle in $G$. Next, assume 
$d(v_2)=2$. Then we do not have any $v_1$-$v_3$ 
Hamiltonian path in $G$. Which contradicts that $G$ is 
Hamiltonian-connected and complete the proof of the lemma.
\end{proof}

Now we prove the Theorem 12.
\begin{proof}
Let $G=(V,E)$ be a Hamiltonian-connected graph, where $V = \{ v_1, v_2, v_3, \ldots , v_{n-1},\\ v_n \}$. Since Hamiltonian-connected graph must be Hamiltonian, so if $n$ is odd then by Theorem 5.1 of \cite{jm},  $\mu (G)$ is Hamiltonian-connected. Thus we suppose that $n$ is even.\par
Now, we assume $\mu (G)$ has vertices $X \cup Y \cup \{z\}$ where $X=\{x_1, x_2, \ldots , x_n\}$ , $Y=\{y_1, y_2, \ldots , y_n\}$. Also $G[X]$, the subgraph induced by the vertex set $X$ is $G$ and $x_1, x_2, x_3, \ldots ,\\x_{n-1}, x_n, x_1$ is a Hamiltonian cycle. Also, $\mu (G)$ has extra edges $zy_i$ for all $i$, and $y_ix_j$ and $y_jx_i$ are edges of $\mu (G)$ if $x_ix_j \in E(G)$. To prove that $\mu (G)$ is Hamiltonian-connected, we have to show that for any two vertices $u,v \in V(\mu (G))$ there exists a $u$-$v$ Hamiltonian path in $\mu (G)$.\par
Now, without loss of generality we consider the following cases.\par
\noindent\textbf{Case 1.} Let $x_1, x_k \in V(G)$. Since $G$ is Hamiltonian-connected suppose we have the following  $x_1$-$x_k$ Hamiltonian path: $x_1 \  x_j \  x_l \cdots x_i \  x_m \hspace{5pt} x_k$ in $G$. Then we construct the following $x_1$-$x_k$ Hamiltonian path $P_1$ in $\mu (G)$ where 
$$P_1: x_1 \  y_j \  x_l \cdots y_i \  x_m \  y_k \  z \  y_1 \  x_j \  y_l \cdots x_i \  y_m \  x_k$$
\textbf{Case 2.} Let $x_1, y_p \in V(\mu (G))$. Suppose we have the following $x_1$-$x_p$ Hamiltonian path $P$ in $G$, where
$$P: x_1 \  x_i \  x_j \  x_k \  x_r \  x_s \ldots x_t\ x_l \  x_m \  x_p$$
Now, we construct the following Hamiltonian $x_1$-$y_p$ path $P_1$ in $\mu (G)$ where 
$$P_1: x_1 \  y_i \  x_j \  y_k \ldots \  x_t \  y_l\ x_m \  x_p \  y_m \  x_l \  y_t \ldots x_k \  y_j \  x_i\ y_1\ z\ y_p$$
\textbf{Case 3.} Let $x_1,z \in V(\mu(G))$. Since $G$ is Hamiltonian-connected, assume $P'$ be the $x_1$-$x_n$ Hamiltonian path in $G$, with $x_1 x_n \in E(G)$, where $P': x_1x_2...x_ix_r...x_qx_s...x_n$. Now, $n$ is even and $G$ contains an odd cycle, also $d(v)\geq 3$ for all $v\in V(G)$, there must exist two even(or odd) indexed vertices of $P'$ that are adjacent. If $x_r$ and $x_s$ are two vertices of odd index, then we relable the vertices of the Hamiltonian cycle $x_1x_2...x_ix_r...x_qx_s...x_nx_1$ so that $x_r$ and $x_s$ are of even index. Thus without loss of generality, let $x_r$ and $x_s$ are two such vertices of even index. Next, as $G$ is Hamiltonian-connected, there exists a $x_i$-$x_r$ Hamiltonian path $P''$ in $G$, where $P'': x_ix_j ... x_s ... x_tx_r$, and $x_ix_r$ and $x_sx_r$ $\in E(G)$.\\
Now, we relable the vertices of $P''$ to get $P$, where $P: x_1x_2x_3...x_{p-1}x_px_{p+1}...x_{n-1}x_n$ and $x_1x_n, x_px_n\in E(G)$ (i.e. we relable $x_i$ by $x_1$, $x_j$ by $x_2$,..., $x_s$ by $x_p$, $x_r$ by $x_n$ etc.).\\
Next, we construct the $x_1$-$z$ Hamiltonian path $P_1$ in $\mu(G)$ as follows (Fig. 4). \\$$P_1: x_1y_2x_3y_4x_5...y_{p-2}x_{p-1}y_px_{p+1}y_{p+2}...x_{n-3}y_{n-2}x_{n-1}y_nx_py_{p-1}x_{p-2}...y_5x_4y_3x_2y_1x_ny_{n-1}x_{n-2}$$ \rightline{$y_{n-3}...x_{p+2}y_{p+1}z$} \\
\begin{figure}[H]
\tikzstyle{every node}=[circle, draw, fill=black,
                        inner sep=0pt, minimum width=3pt]
\centering
\begin{tikzpicture}[scale=1]

\draw (0:0) node [label={[label distance=1pt]90:${\scriptstyle x_1}$}](x_1){} ++ (0:1)node [label={[label distance=1pt]90:${\scriptstyle x_2}$}](x_2){} ++ (0:1) node [label={[label distance=1pt]90:${\scriptstyle x_3}$}](x_3){} ++(0:1) node [label={[label distance=1pt]90:${\scriptstyle x_4}$}](x_4){}++ (0:1)node [label={[label distance=1pt]90:${\scriptstyle x_5}$}](x_5){} ++(0:2) node [label={[label distance=1pt]90:${\scriptstyle x_{p-2}}$}](x_{p-2}){} ++(0:1) node [label={[label distance=1pt]90:${\scriptstyle x_{p-1}}$}](x_{p-1}){} ++(0:1) node [label={[label distance=1pt]90:${\scriptstyle x_p}$}](x_p){} ++(0:1) node [label={[label distance=1pt]90:${\scriptstyle x_{p+1}}$}](x_{p+1}){} ++(0:1) node [label={[label distance=1pt]90:${\scriptstyle x_{p+2}}$}](x_{p+2}){} ++(0:2) node [label={[label distance=1pt]90:${\scriptstyle x_{n-3}}$}](x_{n-3}){} ++(0:1) node [label={[label distance=1pt]90:${\scriptstyle x_{n-2}}$}](x_{n-2}){} ++(0:1) node [label={[label distance=1pt]90:${\scriptstyle x_{n-1}}$}](x_{n-1}){} ++(0:1) node [label={[label distance=1pt]90:${\scriptstyle x_n}$}](x_n){} ;
\draw (x_1)++(-90:5) node [label={[label distance=1pt]-90:${\scriptstyle y_1}$}](y_1){} ++(0:1) node [label={[label distance=1pt]-90:${\scriptstyle y_2}$}](y_2){} ++(0:1) node [label={[label distance=1pt]-90:${\scriptstyle y_3}$}](y_3){} ++(0:1) node [label={[label distance=1pt]-90:${\scriptstyle y_4}$}](y_4){} ++(0:1) node [label={[label distance=1pt]-90:${\scriptstyle y_5}$}](y_5){}++(0:2) node [label={[label distance=1pt]-90:${\scriptstyle y_{p-2}}$}](y_{p-2}){} ++(0:1) node [label={[label distance=1pt]-90:${\scriptstyle y_{p-1}}$}](y_{p-1}){} ++(0:1) node [label={[label distance=1pt]-90:${\scriptstyle y_p}$}](y_p){} ++(0:1) node [label={[label distance=1pt]-80:${\scriptstyle y_{p+1}}$}](y_{p+1}){} ++(0:1) node [label={[label distance=1pt]-90:${\scriptstyle y_{p+2}}$}](y_{p+2}){} ++(0:2) node [label={[label distance=1pt]-90:${\scriptstyle y_{n-3}}$}](y_{n-3}){} ++(0:1) node [label={[label distance=1pt]-90:${\scriptstyle y_{n-2}}$}](y_{n-2}){} ++(0:1) node [label={[label distance=1pt]-90:${\scriptstyle y_{n-1}}$}](y_{n-1}){} ++(0:1) node [label={[label distance=1pt]-90:${\scriptstyle y_n}$}](y_n){};

\draw(y_p)++(-90:4) node [label={[label distance=1pt]-90:${\scriptstyle z}$}](z){} ;
\draw[->,very thick] (x_1)--(y_2);
\draw[->,very thick] (y_2)--(x_3);
\draw[->,very thick] (x_3)--(y_4);
\draw[->,very thick] (y_4)--(x_5);
\draw[->,very thick][dashed] (x_5)--(y_{p-2});
\draw[->,very thick](y_{p-2})--(x_{p-1});
\draw[->,very thick](x_{p-1})--(y_p);
\draw[->,very thick] (y_p)--(x_{p+1});
\draw[->,very thick] (x_{p+1})--(y_{p+2});
\draw[->,very thick][dashed] (y_{p+2})--(x_{n-3});
\draw[->,very thick] (x_{n-3})--(y_{n-2});
\draw[->,very thick] (y_{n-2})--(x_{n-1});
\draw[->,very thick] (x_{n-1})--(y_n);
\draw[->,very thick] (y_n)--(x_p);
\draw[->,very thick] (x_p)--(y_{p-1});
\draw[->,very thick] (y_{p-1})--(x_{p-2});
\draw[->,very thick] [dashed](x_{p-2})--(y_5);
\draw[->,very thick] (y_5)--(x_4);
\draw[->,very thick](x_4)--(y_3);
\draw[->,very thick] (y_3)--(x_2);
\draw[->,very thick] (x_2)--(y_1);
\draw[->,very thick] (y_1)--(x_n);
\draw[->,very thick] (x_n)--(y_{n-1});
\draw[->,very thick] (y_{n-1})--(x_{n-2});
\draw[->,very thick] (x_{n-2})--(y_{n-3});
\draw[->,very thick][dashed] (y_{n-3})--(x_{p+2});
\draw[->,very thick] (x_{p+2})--(y_{p+1});
\draw[->,very thick] (y_{p+1})--(z);

\end{tikzpicture}
\caption{The $x_1$-$z$ path $P_1$.}
\end{figure}
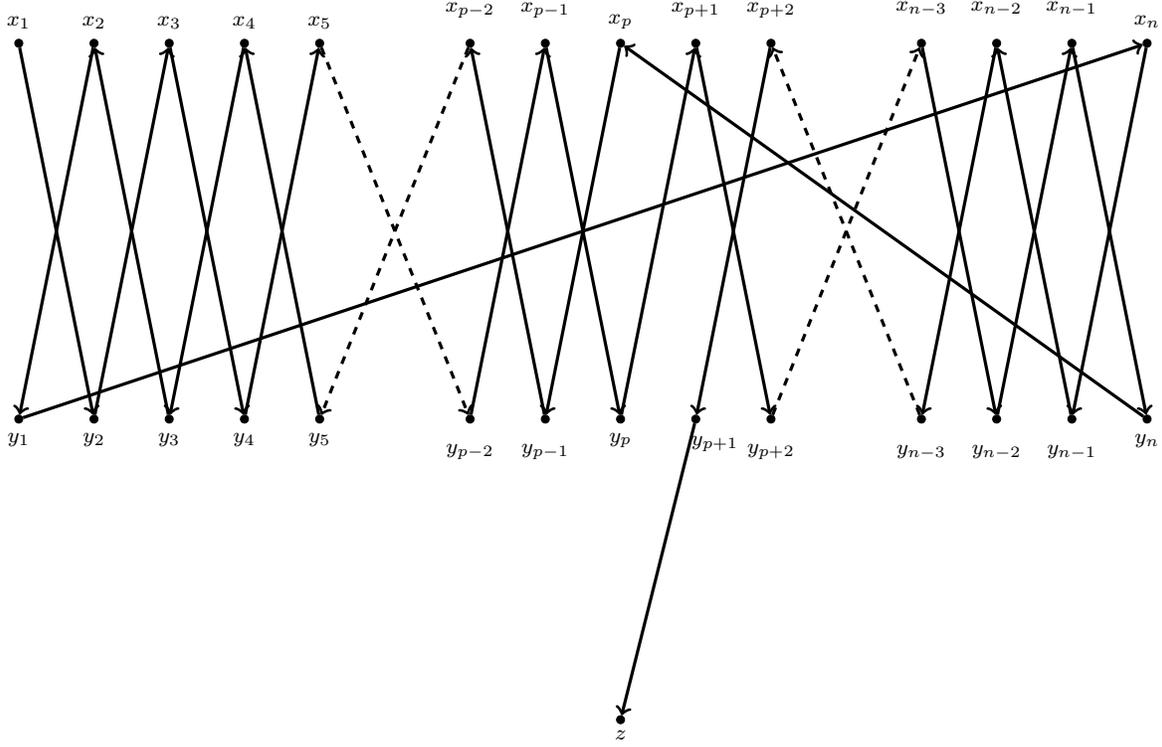
\textbf{Case 4.} Let $x_1, y_1\in V(\mu(G))$. Now we have the following $x_1$-$x_n$ Hamiltonian path $P$ in $G$ where $P: x_1x_2x_3x_4...x_{r-1}x_rx_{r+1}...x_{n-2}x_{n-1}x_n$. Again, as in the case 3, suppose $x_1x_n$ and $x_rx_n$ $\in E(G)$ and $x_r$ is a vertex of even index. Then we construct the following $x_1$-$y_1$ Hamiltonian path $P_1$ in $\mu(G)$, where\\ $$P_1: x_1x_2y_3x_4y_5...y_{r-1}x_ry_nx_{n-1}y_{n-2}... y_{r+2}x_{r+1}y_rx_{r-1}... y_4x_3y_2zy_{r+1}x_{r+2}...x_{n-2}y_{n-1}x_ny_1$$
\\\textbf{Case 5.} Let $y_1, z \in V(\mu(G))$ also $y_1 z$ is an edge of $\mu (G)$. Let $x_1 x_l \in E(G)$, then since $G$ is Hamiltonian-connected we have a $x_1$-$x_l$ Hamiltonian path $P$ in $G$ where 
$$P: x_1 \  x_i \  x_j \  x_m \cdots x_k \  x_p \  x_l$$
Thus in $\mu (G)$, we have the follwing $y_1$-$z$ Hamiltonian path $P_1$ where
$$P_1: y_1 \  x_i \  y_j \  x_m \cdots x_k \  y_p \  x_l \  x_1 \  y_i \  x_j \cdots y_k \  x_p \  y_l \  z $$
\textbf{Case 6.} Let $y_1,y_2 \in V(\mu (G))$ where $x_1x_2\in E(G)$. Now, let $x_1 x_l \in E(G)$ and assume the following $x_1$-$x_l$ Hamiltonian path $P$ in $G$, where $P: x_1 \  x_2 \  x_i \  x_j \cdots x_k \  x_m \  x_l$. Then we construct the following $y_1$-$y_2$ Hamiltonian path $P_1$ in $\mu (G)$, where
$$P_1: y_1 \  x_2 \  x_i \  y_j \cdots y_k \  x_m \  y_l \  x_1 \  x_l \  y_m \  x_k \cdots x_j \  y_i \  z \  y_2$$
\textbf{Case 6.1.} Let $y_1, y_p\in V(\mu (G))$, where $p$ is even and $x_1 x_p \notin E(G)$. Also, assume $x_1x_n \in E(G)$. Since $G$ is Hamiltonian-connected, $P$ be the $x_1$-$x_n$ Hamiltonian path, where  
$P: x_1 \  x_2 \  x_3 \  x_4 \  x_5  \ldots x_{p-1}x_px_{p+1}\ldots x_{n-2} \  x_{n-1} \  x_n$. Next, we construct the $y_1$-$y_p$ Hamiltonian path $P_1$ in $\mu (G)$ as follows ( Figure 5).
$$P_1: y_1 \  x_2 \  y_3 \  x_4 \  y_5 \cdots y_{p-1}x_px_{p+1}y_{p+2}\cdots y_{n-2} \  x_{n-1}\ y_n  \  x_1\ x_n \   y_{n-1} \  x_{n-2}$$ \rightline{$\cdots x_{p+2}\ y_{p+1}  \  z \  y_2 \  x_3 \  y_4\cdots x_{p-1}\ y_p$}.\\
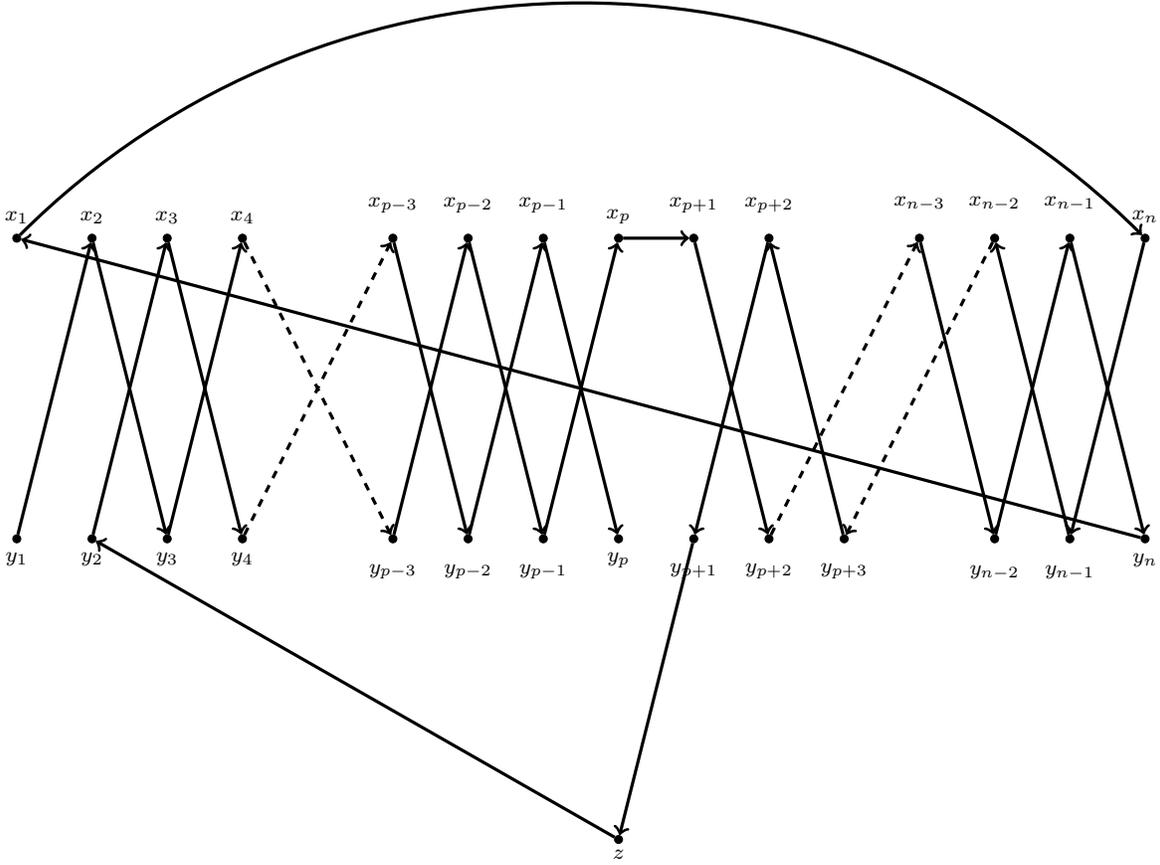
\begin{figure}[H]
\tikzstyle{every node}=[circle, draw, fill=black,
                        inner sep=0pt, minimum width=3pt]
\centering
\begin{tikzpicture}[scale=1]

\draw (0:0) node [label={[label distance=1pt]90:${\scriptstyle x_1}$}](x_1){} ++ (0:1)node [label={[label distance=1pt]90:${\scriptstyle x_2}$}](x_2){} ++ (0:1) node [label={[label distance=1pt]90:${\scriptstyle x_3}$}](x_3){} ++(0:1) node [label={[label distance=1pt]90:${\scriptstyle x_4}$}](x_4){}++ (0:2) node [label={[label distance=1pt]90:${\scriptstyle x_{p-3}}$}](x_{p-3}){} ++(0:1) node [label={[label distance=1pt]90:${\scriptstyle x_{p-2}}$}](x_{p-2}){} ++(0:1) node [label={[label distance=1pt]90:${\scriptstyle x_{p-1}}$}](x_{p-1}){} ++(0:1) node [label={[label distance=1pt]90:${\scriptstyle x_p}$}](x_p){} ++(0:1) node [label={[label distance=1pt]90:${\scriptstyle x_{p+1}}$}](x_{p+1}){} ++(0:1) node [label={[label distance=1pt]90:${\scriptstyle x_{p+2}}$}](x_{p+2}){}  ++(0:2) node [label={[label distance=1pt]90:${\scriptstyle x_{n-3}}$}](x_{n-3}){}++(0:1) node [label={[label distance=1pt]90:${\scriptstyle x_{n-2}}$}](x_{n-2}){} ++(0:1) node [label={[label distance=1pt]90:${\scriptstyle x_{n-1}}$}](x_{n-1}){} ++(0:1) node [label={[label distance=1pt]90:${\scriptstyle x_n}$}](x_n){} ;
\draw (x_1)++(-90:4) node [label={[label distance=1pt]-90:${\scriptstyle y_1}$}](y_1){} ++(0:1) node [label={[label distance=1pt]-90:${\scriptstyle y_2}$}](y_2){} ++(0:1) node [label={[label distance=1pt]-90:${\scriptstyle y_3}$}](y_3){} ++(0:1) node [label={[label distance=1pt]-90:${\scriptstyle y_4}$}](y_4){}++(0:2) node [label={[label distance=1pt]-90:${\scriptstyle y_{p-3}}$}](y_{p-3}){} ++(0:1) node [label={[label distance=1pt]-90:${\scriptstyle y_{p-2}}$}](y_{p-2}){} ++(0:1) node [label={[label distance=1pt]-90:${\scriptstyle y_{p-1}}$}](y_{p-1}){} ++(0:1) node [label={[label distance=1pt]-90:${\scriptstyle y_p}$}](y_p){} ++(0:1) node [label={[label distance=1pt]-90:${\scriptstyle y_{p+1}}$}](y_{p+1}){} ++(0:1) node [label={[label distance=1pt]-90:${\scriptstyle y_{p+2}}$}](y_{p+2}){}++(0:1) node [label={[label distance=1pt]-90:${\scriptstyle y_{p+3}}$}](y_{p+3}){}  ++(0:2) node [label={[label distance=1pt]-90:${\scriptstyle y_{n-2}}$}](y_{n-2}){} ++(0:1) node [label={[label distance=1pt]-90:${\scriptstyle y_{n-1}}$}](y_{n-1}){} ++(0:1) node [label={[label distance=1pt]-90:${\scriptstyle y_n}$}](y_n){};

\draw(y_p)++(-90:4) node [label={[label distance=1pt]-90:${\scriptstyle z}$}](z){} ;
\draw [->,very thick] (y_1)--(x_2);
\draw [->,very thick] (x_2)--(y_3);
\draw [->,very thick] (y_3)--(x_4);
\draw [->,very thick] [dashed](x_4)--(y_{p-3});
\draw [->,very thick](y_{p-3})--(x_{p-2});
\draw[->,very thick](x_{p-2})--(y_{p-1});
\draw[->,very thick] (y_{p-1})--(x_p);
\draw[->,very thick] (x_p)--(x_{p+1});
\draw[->,very thick] (x_{p+1})--(y_{p+2});
\draw[->,very thick][dashed] (y_{p+2})--(x_{n-3});
\draw[->,very thick] (x_{n-3})--(y_{n-2});
\draw[->,very thick] (y_{n-2})--(x_{n-1});
\draw[->,very thick] (x_{n-1})--(y_n);
\draw[->,very thick] (y_n)--(x_1);
\draw[->,very thick] (x_1)to[out=45,in=135](x_n);
\draw[->,very thick] (x_n)--(y_{n-1});
\draw[->,very thick] (y_{n-1})--(x_{n-2});
\draw[->,very thick][dashed] (x_{n-2})--(y_{p+3});
\draw[->,very thick] (y_{p+3})--(x_{p+2});
\draw[->,very thick] (x_{p+2})--(y_{p+1});
\draw[->,very thick] (y_{p+1})--(z);
\draw[->,very thick] (z)--(y_2);
\draw[->,very thick] (y_2)--(x_3);
\draw[->,very thick] (x_3)--(y_4);
\draw[->,very thick] [dashed](y_4)--(x_{p-3});
\draw[->,very thick] (x_{p-3})--(y_{p-2});
\draw[->,very thick] (y_{p-2})--(x_{p-1});
\draw[->,very thick] (x_{p-1})--(y_p);

\end{tikzpicture}
\caption{The $y_1$-$y_p$ path $P_1$ ($p$ is even).}
\end{figure}
\textbf{Case 6.2.} Let $y_1, y_p \in V(\mu (G))$ where $p$ is odd and $x_1 x_p \notin E(G)$. Now, assume $x_1$-$x_n$ Hamiltonian path $P$ in $G$, where
$P: x_1\ x_2\ x_3\ x_4\ldots x_{p-1}\ x_p \ x_{p+1}\ldots x_{n-2}\ x_{n-1}\ x_n$ and $x_1x_n \in E(G)$. Next, we construct the $y_1$-$y_p$ Hamiltonian path $P_1$ in $\mu (G)$ as follows ( Figure 6).\\
$$P_1: y_1 \  x_2 \  y_3 \  x_4 \ldots y_{p-2}\ x_{p-1}\ x_p\ y_{p+1}\ x_{p+2}\ldots y_{n-2}\ x_{n-1}\ y_n\ z\ y_{p-1}\ x_{p-2}\ y_{p-3}\ldots x_5\ y_4\ x_3\ y_2\ x_1\ x_n$$ \rightline{$\ y_{n-1}\ x_{n-2} \dots\ x_{p+3}\ y_{p+2}\ x_{p+1}\ y_p$.}\\
\begin{figure}[H]
\tikzstyle{every node}=[circle, draw, fill=black,
                        inner sep=0pt, minimum width=3pt]
\centering
\begin{tikzpicture}[scale=1.2]

\draw (0:0) node [label={[label distance=1pt]90:${\scriptstyle x_1}$}](x_1){} ++ (0:1)node [label={[label distance=1pt]90:${\scriptstyle x_2}$}](x_2){} ++ (0:1) node [label={[label distance=1pt]90:${\scriptstyle x_3}$}](x_3){} ++(0:1) node [label={[label distance=1pt]90:${\scriptstyle x_4}$}](x_4){} ++(0:2) node [label={[label distance=1pt]90:${\scriptstyle x_{p-2}}$}](x_{p-2}){} ++(0:1) node [label={[label distance=1pt]90:${\scriptstyle x_{p-1}}$}](x_{p-1}){} ++(0:1) node [label={[label distance=1pt]90:${\scriptstyle x_p}$}](x_p){} ++(0:1) node [label={[label distance=1pt]90:${\scriptstyle x_{p+1}}$}](x_{p+1}){} ++(0:1) node [label={[label distance=1pt]90:${\scriptstyle x_{p+2}}$}](x_{p+2}){} ++(0:2) node [label={[label distance=1pt]90:${\scriptstyle x_{n-2}}$}](x_{n-2}){} ++(0:1) node [label={[label distance=1pt]90:${\scriptstyle x_{n-1}}$}](x_{n-1}){} ++(0:1) node [label={[label distance=1pt]90:${\scriptstyle x_n}$}](x_n){} ;
\draw (x_1)++(-90:4) node [label={[label distance=1pt]-90:${\scriptstyle y_1}$}](y_1){} ++(0:1) node [label={[label distance=1pt]-90:${\scriptstyle y_2}$}](y_2){} ++(0:1) node [label={[label distance=1pt]-90:${\scriptstyle y_3}$}](y_3){} ++(0:1) node [label={[label distance=1pt]-90:${\scriptstyle y_4}$}](y_4){} ++(0:2) node [label={[label distance=1pt]-90:${\scriptstyle y_{p-2}}$}](y_{p-2}){} ++(0:1) node [label={[label distance=1pt]-90:${\scriptstyle y_{p-1}}$}](y_{p-1}){} ++(0:1) node [label={[label distance=1pt]-90:${\scriptstyle y_p}$}](y_p){} ++(0:1) node [label={[label distance=1pt]-90:${\scriptstyle y_{p+1}}$}](y_{p+1}){} ++(0:1) node [label={[label distance=1pt]-90:${\scriptstyle y_{p+2}}$}](y_{p+2}){} ++(0:2) node [label={[label distance=1pt]-90:${\scriptstyle y_{n-2}}$}](y_{n-2}){} ++(0:1) node [label={[label distance=1pt]-90:${\scriptstyle y_{n-1}}$}](y_{n-1}){} ++(0:1) node [label={[label distance=1pt]-90:${\scriptstyle y_n}$}](y_n){};

\draw(y_p)++(-90:3) node [label={[label distance=1pt]-90:${\scriptstyle z}$}](z){} ;
\draw [->,very thick] (y_1)--(x_2);
\draw [->,very thick] (x_2)--(y_3);
\draw [->,very thick] (y_3)--(x_4);
\draw [->,very thick][dashed] (x_4)--(y_{p-2});
\draw [->,very thick](y_{p-2})--(x_{p-1});
\draw [->,very thick] (x_{p-1})--(x_p);
\draw [->,very thick] (x_p)--(y_{p+1});
\draw [->,very thick] (y_{p+1})--(x_{p+2});
\draw [->,very thick][dashed] (x_{p+2})--(y_{n-2});
\draw [->,very thick] (y_{n-2})--(x_{n-1});
\draw [->,very thick] (x_{n-1})--(y_n);
\draw [->,very thick] (y_n)--(z);
\draw [->,very thick] (z)--(y_{p-1});
\draw [->,very thick] (y_{p-1})--(x_{p-2});
\draw [->,very thick][dashed] (x_{p-2})--(y_4);
\draw [->,very thick] (y_4)--(x_3);
\draw [->,very thick] (x_3)--(y_2);
\draw [->,very thick] (y_2)--(x_1);
\draw [->,very thick] (x_1)to[out=45,in=135](x_n);
\draw [->,very thick] (x_n)--(y_{n-1});
\draw [->,very thick] (y_{n-1})--(x_{n-2});
\draw [->,very thick] [dashed](x_{n-2})--(y_{p+2});
\draw [->,very thick] (y_{p+2})--(x_{p+1});
\draw [->,very thick] (x_{p+1})--(y_p);

\end{tikzpicture}
\caption{The $y_1$-$y_p$ path $P_1$ ($p$ is odd).}
\end{figure}
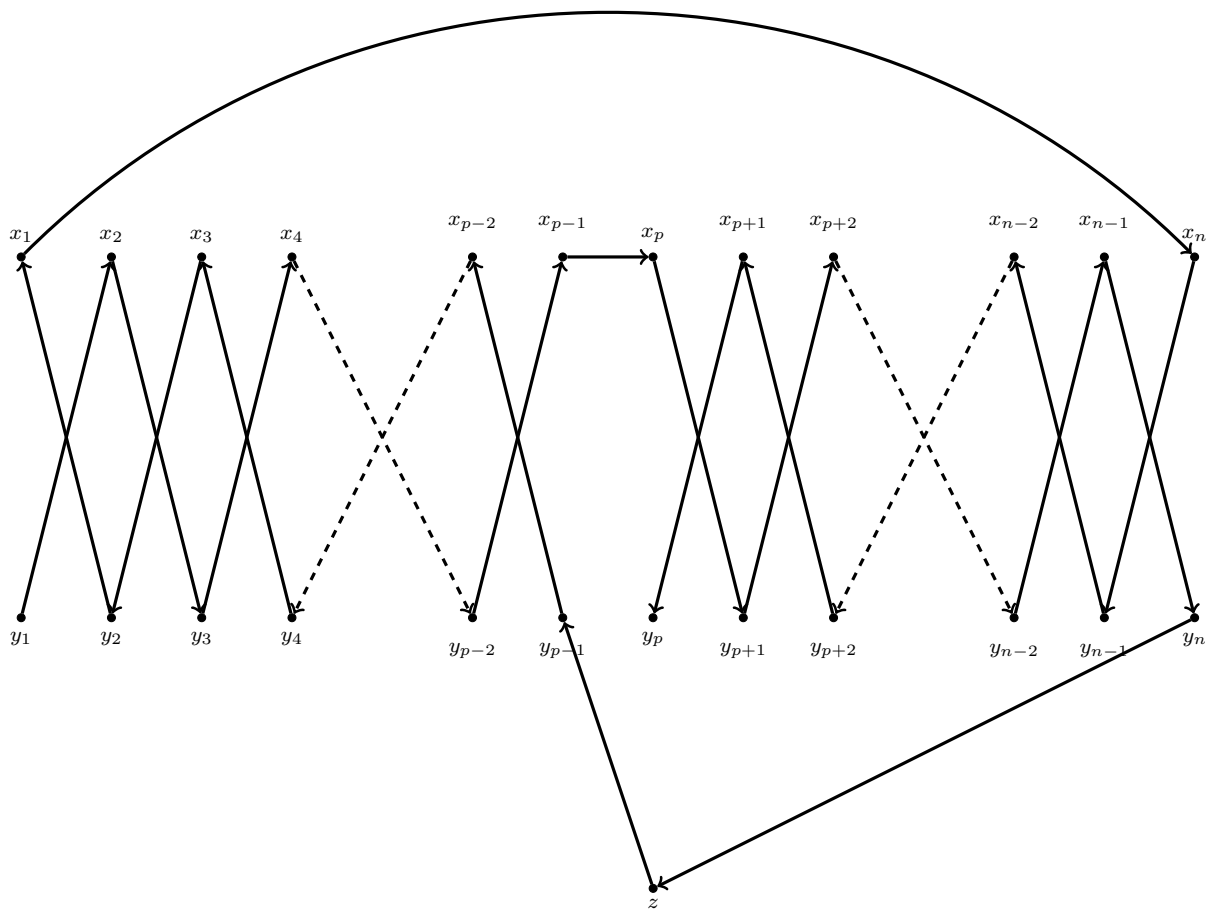

This completes the proof of Theorem 12.
\end{proof}

\end{document}